\newcommand{\tstitl}{\mathsf{G3Tstit}}
\newcommand{\xstitl}{\mathsf{G3Xstit}}
\newcommand{\ldm}{\mathsf{Ldm}}
\newcommand{\ldmn}{\mathsf{Ldm}}
\newcommand{\ldmt}{\mathsf{Tstit}}
\newcommand{\tstit}{\mathsf{Tstit}}
\newcommand{\xstit}{\mathsf{Xstit}}
\newcommand{\settrefl}{(\mathsf{refl}_{\Box})}
\newcommand{\setteucl}{(\mathsf{eucl}_{\Box})}
\newcommand{\ioa}{(\mathsf{IOA})}
\newcommand{\bridge}{(\mathsf{br}_{[i]})}
\newcommand{\gtrans}{(\mathsf{trans}_{\g})}
\newcommand{\gser}{(\mathsf{ser}_{\g})}
\newcommand{\gconn}{(\mathsf{conn}_{\g})}
\newcommand{\hconn}{(\mathsf{conn}_{\h})}
\newcommand{\gconv}{(\mathsf{conv}_{\g})}
\newcommand{\hconv}{(\mathsf{conv}_{\h})}
\newcommand{\ncuh}{(\mathsf{ncuh})}
\newcommand{\agtd}{(\mathsf{agd})}
\newcommand{\irrtwo}{(\mathsf{irr}_{\g})}
\newcommand{\comp}{(\mathsf{comp}_{\g 1})}
\newcommand{\comptwo}{(\mathsf{comp}_{\g 2})}
\newcommand{\ioax}{(\mathsf{IOA_X})}
\newcommand{\id}{(\mathsf{id})}
\newcommand{\R}{\mathcal{R}}
\newcommand{\lb}{\langle}
\newcommand{\rb}{\rangle}
\newcommand{\g}{\mathsf{G}}
\newcommand{\h}{\mathsf{H}}
\newcommand{\pres}{\mathsf{pres}}
\newcommand{\fut}{\mathsf{fut}}
\begin{document}
%
%\title{Contribution Title\thanks{Supported by organization x.}}
%
%
\title{Appendix for: Cut-free Calculi and Relational Semantics for Temporal STIT logics}
\author{Kees van Berkel\inst{1} \and Tim Lyon\inst{1}}

\authorrunning{Berkel and Lyon}

\institute{Institut f\"ur Logic and Computation, Technische Universit\"at Wien, 1040 Wien, Austria  \\ \email{\{kees,lyon\}@logic.at}}
%\titlerunning{Abbreviated paper title}
% If the paper title is too long for the running head, you can set
% an abbreviated paper title here
%
%\author{First Author\inst{1}\orcidID{0000-1111-2222-3333} \and
%Second Author\inst{2,3}\orcidID{1111-2222-3333-4444} \and
%Third Author\inst{3}\orcidID{2222--3333-4444-5555}}
%
%\authorrunning{F. Author et al.}
% First names are abbreviated in the running head.
% If there are more than two authors, 'et al.' is used.
%
%\institute{Princeton University, Princeton NJ 08544, USA \and
%Springer Heidelberg, Tiergartenstr. 17, 69121 Heidelberg, Germany
%\email{lncs@springer.com}\\
%\url{http://www.springer.com/gp/computer-science/lncs} \and
%ABC Institute, Rupert-Karls-University Heidelberg, Heidelberg, Germany\\
%\email{\{abc,lncs\}@uni-heidelberg.de}}
%
\maketitle              % typeset the header of the contribution
\begin{abstract}
This paper is an appendix to the paper ``Cut-free Calculi and Relational Semantics for Temporal STIT logics'' by Berkel and Lyon, 2019 \cite{BerLyo19}. It provides the completeness proof for the basic STIT logic $\ldm$ (relative to irreflexive, temporal Kripke STIT frames) as well as gives the derivation of the independence of agents axiom for the logic $\xstit$.
\end{abstract}

%%%%%%%%%%%%%%%%%%%%%%%%%%%%%%%%%%%%%%%%%%%%
%%%%%%%%%%%%%%%%%APPENDIX%%%%%%%%%%%%%%%%%%%
%%%%%%%%%%%%%%%%%%%%%%%%%%%%%%%%%%%%%%%%%%%%

\appendix

   \section{Completeness of $\ldm$}\label{Completeness_Proof_ldm}
%In this appendix, 
We give the definitions and lemmas sufficient to prove the completeness of $\ldm$ relative to $\tstit$ frames \cite{Lor13,BerLyo19}. We make use of the canonical model of $\ldm$ (obtained by standard means \cite{BlaRijVen01,BalHerTro08}) to construct a $\tstit$ model. A truth-lemma is then given relative to this model, from which, completeness follows as a corollary.

\begin{definition}[$\ldmn$-CS, $\ldmn$-MCS] A set $\Theta \subset \mathcal{L}_{\ldmn}$ is a \emph{$\ldmn$ consistent set ($\ldmn$-CS)} iff $\Theta \not\vdash_{\ldmn} \bot$. We call a set $\Theta \subset \mathcal{L}_{\ldmn}$ a \emph{$\ldmn$ maximally consistent set ($\ldmn$-MCS)} iff $\Theta$ is a $\ldmn$-CS and for any set $\Theta'$ such that $\Theta \subset \Theta'$, $\Theta' \vdash_{\ldmn} \bot$.
\end{definition}

\begin{lemma}[Lindenbaum's Lemma \cite{BlaRijVen01}]\label{Lindenbaum}
Every $\ldmn$-CS can be extended to a $\ldmn$-MCS.
\end{lemma}

\begin{definition}[Present and Future Pre-Canonical $\tstit$ Model] The \emph{present pre-canonical $\tstit$ model} is the tuple $M^{\pres} = (W^{\pres}, \R^{\pres}_{\Box},$ $\{\R^{\pres}_{i}|i \in Ag\}, V^{\pres})$ defined below left, and the \emph{future pre-canonical $\tstit$ model} is the tuple $M^{\fut} = (W^{\fut}, \R^{\fut}_{\Box}, \{\R^{\fut}_{i}|i \in Ag\}, V^{\fut})$ defined below right:

\begin{center}
\begin{multicols}{2}

\begin{center}

\begin{itemize}

\item $W^{\pres}$ is the set of all $\ldm$-MCSs;

\item $\R^{\pres}_{\Box}wu$ iff for all $\Box \phi \in w$, $\phi \in u$;

\item $\R^{\pres}_{i}wu$ iff for all $[i] \phi \in w$, $\phi \in u$;

\item $V^{\pres}(p) = \{w \in W| p \in w\}$.

\end{itemize}
\end{center}

\begin{center}

\begin{itemize}

\item $W^{\fut} = W^{\pres}$;

\item $\R^{\fut}_{\Box}(w) = \bigcap_{i \in Ag} \R^{\pres}_{i}(w)$;

\item $\R^{\fut}_{i}(w) = \bigcap_{i \in Ag} \R^{\pres}_{i}(w)$;

\item $V^{\fut}(p) = V^{\pres}(p)$.

\end{itemize}
\end{center}

\end{multicols}
\end{center}

\end{definition}

\begin{definition}[Canonical Temporal Kripke STIT Model] We define the \emph{canonical temporal Kripke STIT model} to be the tuple $M^{\ldm} = (W^{\ldm}, \R^{\ldm}_{\Box}, $ $\{R^{\ldm}_{i} | i \in Ag\}, \R^{\ldm}_{Ag}, \R^{\ldm}_{\g}, \R^{\ldm}_{\h},$ $V^{\ldm})$ such that:

\begin{itemize}

\item $W^{\ldm} = W^{\pres} \times \mathbb{N}$\footnote{Note that we choose to write each world $(w,j) \in W^{\ldm}$ as $w^{j}$ to simplify notation. Moreover, we write $\phi \in w^{j}$ to mean that the formula $\phi$ is in the $\ldm$-MCS $w$ associated with $j$.};

\item $\R^{\ldm}_{\Box}w^{j}u^{j}$ iff (i) $\R^{\pres}_{\Box}wu$ and $j = 0$, or (ii) $\R^{\fut}_{\Box}wu$ and $j > 0$;

\item $\R^{\ldm}_{i}w^{j}u^{j}$ iff (i) $\R^{\pres}_{i}wu$ and $j = 0$, or (ii) $\R^{\fut}_{i}wu$ and $j > 0$;

\item $\R^{\ldm}_{Ag}(w^{j}) = \bigcap_{1 \leq i \leq n} \R^{\ldm}_{i}(w^{j})$;

\item $\R^{\ldm}_{\g} = \{(w^{j},w^{k})| w^{j}, w^{k} \in W^{\ldm} \text{ and } j < k \}$;

\item $\R^{\ldm}_{\h} = \{(u^{i},w^{i})| (w^{i},u^{i}) \in \R^{\ldm}_{\g}\}$;

\item $V^{\ldm}(p) = \{w^{j} \in W^{\ldm}| w \in V^{\pres}(p)\}$.

\end{itemize}

\end{definition}

\begin{lemma}\label{Relate_only_same_level} For all $\alpha \in \{\Box, Ag\} \cup Ag$, if $\R^{\ldm}_{\alpha}w^{j}u^{k}$ for $j,k \in \mathbb{N}$, then $j = k$.
\end{lemma}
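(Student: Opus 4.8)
The plan is to argue by cases on the index $\alpha \in \{\Box, Ag\} \cup Ag$, exploiting the fact that the defining clauses for $\R^{\ldm}_{\Box}$ and for each $\R^{\ldm}_{i}$ are already stated only for pairs of worlds sharing a common temporal index.

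First I would dispatch the case $\alpha = \Box$. The definition of $\R^{\ldm}_{\Box}$ introduces the relation solely on pairs of the form $(w^{j}, u^{j})$: clause (i) governs the index $0$ and clause (ii) governs the indices $j > 0$, and in both clauses the two worlds carry the same superscript. Consequently the relation fails on any pair $(w^{j}, u^{k})$ with $j \neq k$, so $\R^{\ldm}_{\Box}w^{j}u^{k}$ can hold only when $j = k$. The case $\alpha = i$ for an individual agent $i \in Ag$ is handled identically, since $\R^{\ldm}_{i}$ is defined by exactly the same index-matching scheme.

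It remains to treat $\alpha = Ag$, where I would appeal to the two cases already established. By definition $\R^{\ldm}_{Ag}(w^{j}) = \bigcap_{1 \leq i \leq n} \R^{\ldm}_{i}(w^{j})$, so if $\R^{\ldm}_{Ag}w^{j}u^{k}$ then in particular $u^{k} \in \R^{\ldm}_{i}(w^{j})$, i.e. $\R^{\ldm}_{i}w^{j}u^{k}$, for every agent $i$. Applying the individual-agent case to any such $i$ yields $j = k$, which completes the argument. Should the intersection be empty the claim holds vacuously, so no nonemptiness assumption is needed.

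There is no substantial obstacle here: the lemma is essentially a bookkeeping observation that the canonical construction never links worlds at distinct temporal levels via the modal or agentive relations. The only points requiring care are the order of the cases---the $Ag$ case must come after the individual-agent case, since it is reduced to the latter---and the observation that the conclusion is trivially secured when no witness $u^{k}$ exists.
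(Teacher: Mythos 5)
Your proposal is correct and matches the paper's reasoning: the paper simply states that the lemma ``follows by definition of the canonical $\tstit$ model,'' and your case analysis on $\alpha$ (reading off the index-matching in the clauses for $\R^{\ldm}_{\Box}$ and $\R^{\ldm}_{i}$, then reducing the $Ag$ case to the individual-agent case via the intersection) is exactly the unpacking of that definition.
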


\begin{proof} Follows by definition of the canonical $\tstit$ model.

\end{proof}

\begin{lemma}\label{AG_Rel_Same_All_Levels} For all $j \in \mathbb{N}$ with $k \geq 1$, $(w^{j},u^{j}) \in \R^{\ldm}_{Ag}$ iff $(w^{j+k},u^{j+k}) \in \R^{\ldm}_{Ag}$.
\end{lemma}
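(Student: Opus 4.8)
The plan is to reduce both directions of the biconditional to a single level-independent condition, namely membership of $u$ in $\bigcap_{i \in Ag} \R^{\pres}_i(w)$. First I would unfold the definition of the agential relation: $(w^j,u^j) \in \R^{\ldm}_{Ag}$ holds iff $(w^j,u^j) \in \R^{\ldm}_i$ for every agent $i$, so that the whole question reduces to the behaviour of the individual relations $\R^{\ldm}_i$ at matching levels. The argument then proceeds by a case split on whether a level is $0$ or positive, exactly mirroring the two clauses in the definition of $\R^{\ldm}_i$.

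For a positive level $m > 0$, clause (ii) gives $\R^{\ldm}_i w^m u^m$ iff $\R^{\fut}_i wu$, and by the definition of the future pre-canonical model $\R^{\fut}_i(w) = \bigcap_{i \in Ag} \R^{\pres}_i(w)$ for \emph{every} agent $i$. Hence at any positive level $(w^m,u^m) \in \R^{\ldm}_{Ag}$ holds iff $u \in \bigcap_{i \in Ag} \R^{\pres}_i(w)$, uniformly in $m$. For level $0$, clause (i) gives $\R^{\ldm}_i w^0 u^0$ iff $\R^{\pres}_i wu$, so $(w^0,u^0) \in \R^{\ldm}_{Ag}$ holds iff $u \in \R^{\pres}_i(w)$ for every $i$, i.e. iff $u \in \bigcap_{i \in Ag} \R^{\pres}_i(w)$ — the very same condition.

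Since $k \geq 1$ guarantees $j+k > 0$, the right-hand side $(w^{j+k},u^{j+k}) \in \R^{\ldm}_{Ag}$ is always governed by the future clause, whereas the left-hand side is governed by the present clause when $j = 0$ and by the future clause when $j > 0$. In each case both sides are equivalent to $u \in \bigcap_{i \in Ag} \R^{\pres}_i(w)$, a condition making no reference to the level, so the biconditional follows at once.

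The only real subtlety — and where I would take most care — is the collapse encoded in the future model: the definition assigns every agent the \emph{same} relation $\bigcap_{i \in Ag} \R^{\pres}_i(w)$, and it is precisely this that makes the present-level agential relation (an intersection over agents of the distinct $\R^{\pres}_i$) coincide with each future-level agential relation (where each conjunct is already that full intersection). Verifying that these two superficially different intersections yield the same set is the heart of the matter; the accompanying case analysis on levels is then routine.
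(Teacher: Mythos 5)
Your proof is correct and follows essentially the same route as the paper: the paper's own argument is exactly the chain of equivalences reducing both $u^{0} \in \R^{\ldm}_{Ag}(w^{0})$ and $u^{k} \in \bigcap_{i \in Ag} \R^{\ldm}_{i}(w^{k})$ for $k > 0$ to the level-independent condition $u \in \bigcap_{i \in Ag} \R^{\pres}_{i}(w)$, using the fact that $\R^{\fut}_{i}(w)$ is defined as that same intersection for every agent $i$. Your explicit case split on $j = 0$ versus $j > 0$ and your remark about the collapse of the future relations just make the paper's one-line chain slightly more verbose.
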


\begin{proof} This follows from the fact that $u^{0} \in \R^{\ldm}_{Ag}(w^{0})$ iff $u \in \bigcap_{i \in Ag} \R^{\pres}_{i}(w)$ iff $u \in \R^{\fut}_{i}(w)$ for each $i \in Ag$ iff $u \in \bigcap_{i \in Ag} \R^{\fut}_{i}(w)$ iff $u^{k} \in \bigcap_{i \in Ag} \R^{\ldm}_{i}(w^{k})$ for any $k > 0$.
\end{proof}

\begin{lemma}[\cite{BlaRijVen01}]\label{Relations_Diamonds} (i) For all $\mathsf{x} \in\{\pres, \fut, \ldm\}$, $\R^{\mathsf{x}}_{\Box}wu$ iff for all $\phi$, if $\phi \in u$, then $\Diamond \phi \in w$. (ii) For all $\mathsf{x} \in\{\pres, \fut, \ldm\}$, $\R^{\mathsf{x}}_{i}wu$ iff for all $\phi$, if $\phi \in u$, then $\lb i \rb \phi \in w$.

\end{lemma}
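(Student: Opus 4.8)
The plan is to read each biconditional off the Boolean closure of maximal consistent sets together with the duality between the boxes and their diamonds, namely that $\Diamond\phi$ is provably equivalent to $\neg\Box\neg\phi$ and $\lb i\rb\phi$ to $\neg[i]\neg\phi$, so that one of these belongs to a given MCS exactly when the negation of the other is absent. I would prove the present case of (i) in full and then reduce the remaining cases to it.

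For the present case of (i), take the left-to-right direction and argue by contraposition: assuming $\R^{\pres}_{\Box}wu$, if $\Diamond\phi\notin w$ then $\neg\Diamond\phi\in w$ since $w$ is an MCS, hence $\Box\neg\phi\in w$ by duality and closure under provable equivalence; the defining condition of $\R^{\pres}_{\Box}$ then yields $\neg\phi\in u$, so $\phi\notin u$ by consistency of $u$. Conversely, assume the diamond condition and take any $\Box\phi\in w$; were $\phi\notin u$ we would have $\neg\phi\in u$, whence $\Diamond\neg\phi\in w$ by hypothesis and therefore $\neg\Box\phi\in w$ by duality, contradicting $\Box\phi\in w$ and the consistency of $w$. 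The present case of (ii) is verbatim the same after replacing $\Box,\Diamond,\R^{\pres}_{\Box}$ by $[i],\lb i\rb,\R^{\pres}_{i}$.

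For $\fut$ and $\ldm$ I would unfold the definitions. Since $\R^{\fut}_{\Box}(w)$ and each $\R^{\fut}_{i}(w)$ are the single set $\bigcap_{i\in Ag}\R^{\pres}_{i}(w)$, the claims collapse to statements about the present agentive relations, and here I would combine the present case of (ii) with the bridge principle $\Box\phi\to[i]\phi$ of $\ldm$ (equivalently its dual $\lb i\rb\phi\to\Diamond\phi$) in order to pass between the $\lb i\rb$- and the $\Diamond$-formulations. Finally, because $\R^{\ldm}_{\Box}$ and $\R^{\ldm}_{i}$ are defined level-by-level as either the present or the future relation, and by Lemma~\ref{Relate_only_same_level} only same-level worlds are ever related while $\phi\in w^{j}$ means nothing more than $\phi\in w$, the $\ldm$ case follows from the $\pres$ and $\fut$ cases by splitting on whether $j=0$ or $j>0$.

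The step I expect to be the main obstacle is the $\fut$ case (and hence the $j>0$ part of $\ldm$): unlike the present relation, the future $\Box$-relation is not given directly by the $\Box$-modality but as the intersection of the agentive relations, so the equivalence cannot simply be read off the canonical construction and must instead be routed through the axioms that link $\Box$ to the operators $[i]$. One must also take care that the level index $j$ is genuinely irrelevant to membership in the underlying MCS, which is precisely what Lemmas~\ref{Relate_only_same_level} and~\ref{AG_Rel_Same_All_Levels} guarantee.
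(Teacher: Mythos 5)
Your present-tense case is the standard textbook argument and is correct; it also matches what the paper actually does, which is simply to cite Blackburn--de Rijke--Venema and give no proof at all. The genuine problem is the step you yourself flag as the main obstacle: the $\fut$ case (and hence the $j>0$ part of the $\ldm$ case) cannot be completed by the route you describe, because the bridge axiom only runs in one direction. Concretely, $\R^{\fut}_{\Box}wu$ means $u\in\bigcap_{i\in Ag}\R^{\pres}_{i}(w)$, so for the left-to-right direction of (i) you can indeed go from $\phi\in u$ to $\lb i\rb\phi\in w$ (present case of (ii)) and then to $\Diamond\phi\in w$ via the dual bridge principle $\lb i\rb\phi\rightarrow\Diamond\phi$. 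But for the right-to-left direction you must pass from ``$\Diamond\phi\in w$ for all $\phi\in u$'' (which by the present case of (i) only gives $\R^{\pres}_{\Box}wu$) to $\R^{\pres}_{i}wu$ for every $i$, and this requires the converse implication $\Diamond\phi\rightarrow\lb i\rb\phi$, which is not a theorem of $\ldm$: in STIT semantics $\R^{\pres}_{i}\subseteq\R^{\pres}_{\Box}$ is a strict inclusion in general, so $\bigcap_{i}\R^{\pres}_{i}(w)$ is typically a proper subset of $\R^{\pres}_{\Box}(w)$. The same obstruction blocks the right-to-left direction of (ii) for $\fut$, where you would need $\R^{\pres}_{i}wu$ for \emph{all} agents from a hypothesis that only yields it for the single agent $i$.

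In other words, the $\Diamond$-characterisation in (i) characterises $\R^{\pres}_{\Box}$, not the intersection relation $\R^{\fut}_{\Box}$, and no appeal to the axioms linking $\Box$ and $[i]$ will close that gap. You should either restrict the biconditional to the $\pres$ case (and the $j=0$ stratum of $\ldm$), proving only the left-to-right implications for $\fut$ and $j>0$, or observe that the completeness argument never needs more: the sole invocation of this lemma, in clause (C2) of Lemma~\ref{Canonical_is_TTKSTIT_Model}, uses only the level-$0$, left-to-right direction of (i). Your reduction of the $\ldm$ case to the $\pres$ and $\fut$ cases by splitting on $j$ is fine as far as it goes, and your observation that the level index is irrelevant to MCS membership is correct, but it inherits the defect of the $\fut$ case for $j>0$.
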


%\begin{proof} See \cite{BlaRijVen01} for details.
%\end{proof}

\begin{lemma}[Existence Lemma \cite{BlaRijVen01}]\label{Existence_Lemma} (i) For any world $w^{j} \in W^{\ldm}$, if $\Diamond \phi \in w^{j}$, then there exists a world $u^{j} \in W^{\ldm}$ such that $\R^{\ldm}_{\Box}w^{j}u^{j}$ and $\phi \in u^{j}$. (ii) For any world $w^{j} \in W^{\ldm}$, if $\lb i \rb \phi \in w^{j}$, then there exists a world $u^{j} \in W^{\ldm}$ such that $\R^{\ldm}_{i}w^{j}u^{j}$ and $\phi \in u^{j}$.

\end{lemma}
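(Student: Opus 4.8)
The plan is to prove both parts by the standard Henkin-style existence construction: build the witnessing world as a maximally consistent set via Lindenbaum's Lemma (Lemma~\ref{Lindenbaum}), and then invoke Lemma~\ref{Relations_Diamonds} to certify that it stands in the required canonical relation. Since the two parts are structurally identical --- differing only in whether one works with $\Box/\Diamond$ or with $[i]/\langle i\rangle$ --- I would set up one template and instantiate it twice.

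For part (i), fix $w^{j}$ with $\Diamond\phi\in w^{j}$, and consider the theory $\Gamma := \{\phi\}\cup\{\psi : \Box\psi\in w\}$. The whole argument hinges on showing that $\Gamma$ is $\ldm$-consistent; this is the step I expect to carry the real content. Supposing otherwise, since $\vdash_{\ldm}$ is finitary there are $\psi_{1},\dots,\psi_{n}$ with each $\Box\psi_{k}\in w$ and $\vdash_{\ldm}(\psi_{1}\wedge\cdots\wedge\psi_{n})\to\neg\phi$; applying necessitation together with the normality of $\Box$ yields $\vdash_{\ldm}(\Box\psi_{1}\wedge\cdots\wedge\Box\psi_{n})\to\Box\neg\phi$, and since $w$ is an MCS containing each $\Box\psi_{k}$ we obtain $\Box\neg\phi\in w$, i.e.\ $\neg\Diamond\phi\in w$, contradicting $\Diamond\phi\in w$. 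Hence $\Gamma$ is consistent.

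Having secured consistency, I would extend $\Gamma$ to an $\ldm$-MCS $u$ by Lemma~\ref{Lindenbaum} and take $u^{j} := (u,j)$, which sits at the same temporal level as $w^{j}$ in line with Lemma~\ref{Relate_only_same_level}. By construction $\phi\in u^{j}$, so it remains only to establish $\R^{\ldm}_{\Box} w^{j} u^{j}$. Since $\{\psi : \Box\psi\in w\}\subseteq u$, the usual MCS-duality between the two characterisations of a canonical successor gives that $\chi\in u$ implies $\Diamond\chi\in w$ for all $\chi$; by Lemma~\ref{Relations_Diamonds}(i) at $\mathsf{x}=\ldm$ this is precisely $\R^{\ldm}_{\Box} w^{j} u^{j}$. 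The appeal to Lemma~\ref{Relations_Diamonds} is exactly what lets a single construction absorb both clauses of the definition of $\R^{\ldm}_{\Box}$, so that the cases $j=0$ and $j>0$ need not be treated separately. Part (ii) then follows verbatim after replacing $\Box,\Diamond$ by $[i],\langle i\rangle$, using the normality of $[i]$ and the hypothesis $\langle i\rangle\phi\in w$ in the consistency step and Lemma~\ref{Relations_Diamonds}(ii) in the final step; I anticipate no genuinely new obstacle there.
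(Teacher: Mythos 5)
Your consistency-plus-Lindenbaum construction is the standard argument that the paper itself defers to by citing \cite{BlaRijVen01} (no explicit proof is given in the text), and it is correct as far as it goes: it produces an $\ldm$-MCS $u$ with $\phi\in u$ and $\{\psi : \Box\psi\in w\}\subseteq u$, i.e.\ $\R^{\pres}_{\Box}wu$. For $j=0$ this is exactly $\R^{\ldm}_{\Box}w^{0}u^{0}$, and $j=0$ is the only case the paper ever uses (in clause \textbf{(C2)} of Lemma~\ref{Canonical_is_TTKSTIT_Model} and in the Truth-Lemma~\ref{Truth_Lemma}, which is stated only at $w^{0}$). So for the working instance of the lemma your proof matches the intended one.

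The gap is the step where you claim that Lemma~\ref{Relations_Diamonds} lets ``a single construction absorb both clauses of the definition of $\R^{\ldm}_{\Box}$'', so that $j=0$ and $j>0$ need not be separated. For $j>0$, $\R^{\ldm}_{\Box}w^{j}u^{j}$ holds by definition iff $\R^{\fut}_{\Box}wu$, i.e.\ iff $u\in\bigcap_{i\in Ag}\R^{\pres}_{i}(w)$, which is in general a \emph{proper} subset of $\R^{\pres}_{\Box}(w)$: the axiom $\Box\phi\rightarrow[i]\phi$ gives $\R^{\pres}_{i}\subseteq\R^{\pres}_{\Box}$, hence $\bigcap_{i}\R^{\pres}_{i}\subseteq\R^{\pres}_{\Box}$, and nothing forces the converse inclusion. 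Your construction only lands $u$ in the larger set, and the appeal to Lemma~\ref{Relations_Diamonds} at $\mathsf{x}=\fut$, or at $\mathsf{x}=\ldm$ with $j>0$, cannot close this: read against the definitions, that instance of the lemma would assert $\bigcap_{i}\R^{\pres}_{i}=\R^{\pres}_{\Box}$, which is precisely what is in question and is not the standard fact from \cite{BlaRijVen01} (that fact concerns the canonical relation $\R^{\pres}_{\Box}$ only). Indeed the $j>0$ instance of part (i) is false as stated: take an $\ldm$-MCS $w$ containing $[1]p\wedge\Diamond\neg p$ (a consistent deliberative-STIT situation); then $\Diamond\neg p\in w$, yet every $u\in\bigcap_{i}\R^{\pres}_{i}(w)\subseteq\R^{\pres}_{1}(w)$ contains $p$, so no suitable $u^{j}$ exists. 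The uniform treatment is therefore not a harmless streamlining but the point where the argument breaks; the defensible claim, and the one the paper actually relies on, is the $j=0$ case that your construction does establish.
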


%\begin{proof} See \cite{BlaRijVen01} for details.
%\end{proof}

\begin{lemma}\label{Canonical_is_TTKSTIT_Model}
The Canonical Model is a temporal Kripke STIT model.
\end{lemma}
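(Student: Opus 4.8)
The plan is to verify that the canonical model $M^{\ldm}$ satisfies each of the defining conditions (frame properties) of a temporal Kripke STIT model. Since these frame conditions are not spelled out in the excerpt, I would work from the standard axiomatization of $\tstit$ frames (cf.~\cite{Lor13,BerLyo19}): the relations $\R^{\ldm}_{\Box}$ and each $\R^{\ldm}_{i}$ must be equivalence relations (reflexive, transitive, euclidean), we need the inclusions $\R^{\ldm}_{i} \subseteq \R^{\ldm}_{\Box}$ for every $i \in Ag$, the agents relation must satisfy $\R^{\ldm}_{Ag} = \bigcap_{i} \R^{\ldm}_{i}$, the temporal relations $\R^{\ldm}_{\g}$ and $\R^{\ldm}_{\h}$ must be mutually converse with $\R^{\ldm}_{\g}$ transitive and serial, and finally the characteristic STIT conditions linking the temporal and modal dimensions — independence of agents and the ``no choice between undivided histories'' condition — must hold. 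The proof is therefore a checklist, treating each condition in turn.

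First I would handle the $S5$-type properties of $\R^{\ldm}_{\Box}$ and $\R^{\ldm}_{i}$. By Lemma~\ref{Relate_only_same_level} these relations never cross temporal levels, so it suffices to check the properties levelwise. At level $j=0$ the relations coincide with $\R^{\pres}_{\Box}$ and $\R^{\pres}_{i}$, which inherit reflexivity, transitivity, and euclideanness from the fact that $\ldm$ contains the $S5$ axioms for $\Box$ and each $[i]$ (this is the standard canonical-model argument via Lemma~\ref{Relations_Diamonds}). At levels $j>0$ the relations are defined through $\R^{\fut}$, which is built from the intersection $\bigcap_{i} \R^{\pres}_{i}$; intersections of equivalence relations are again equivalence relations, so these properties transfer. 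The inclusion $\R^{\ldm}_{i} \subseteq \R^{\ldm}_{\Box}$ at level $0$ follows from the axiom $\Box \phi \to [i]\phi$, and at levels $j>0$ it is immediate since both relations are defined to equal $\bigcap_{i} \R^{\pres}_{i}(w)$. The identity $\R^{\ldm}_{Ag} = \bigcap_{1 \leq i \leq n} \R^{\ldm}_{i}$ holds by definition.

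Next I would dispatch the temporal conditions. Seriality of $\R^{\ldm}_{\g}$ is clear, since for any $w^{j}$ the world $w^{j+1}$ is a $\g$-successor; transitivity follows from transitivity of $<$ on $\mathbb{N}$; and $\R^{\ldm}_{\h}$ is the converse of $\R^{\ldm}_{\g}$ by definition, giving the converse conditions directly. The remaining temporal-frame requirements (such as connectedness or any determinism conditions the definition imposes) reduce to elementary order-theoretic facts about $<$ together with the observation that $W^{\ldm} = W^{\pres} \times \mathbb{N}$ shares a single temporal spine.

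The main obstacle, as usual for STIT frames, will be the two genuinely two-dimensional conditions: the \emph{independence of agents} condition and the \emph{no choice between undivided histories} condition, both of which couple the choice relations with the temporal structure. For independence of agents I would argue that any system of locally compatible choices has a common refinement in the canonical model, appealing to the syntactic independence-of-agents axiom and the existence of the requisite MCS via Lemma~\ref{Existence_Lemma}. For the undivided-histories condition, the key leverage is Lemma~\ref{AG_Rel_Same_All_Levels}, which says the $\R^{\ldm}_{Ag}$-structure is replicated identically at every temporal level above a given one; this is precisely what guarantees that choices made at a moment are preserved along histories that have not yet divided, so the condition reduces to a transfer argument between level $j$ and level $j+k$ combined with the confluence of $\R^{\ldm}_{\g}$ and $\R^{\ldm}_{Ag}$ furnished by Lemma~\ref{Relate_only_same_level}. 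I expect verifying this interplay — rather than the routine $S5$ and order checks — to require the most care.
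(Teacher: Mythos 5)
Your checklist and the tools you reach for match the paper's proof almost item for item: a levelwise verification of the $S5$ properties (standard canonical-model arguments at level $0$, and the fact that an intersection of equivalence relations is again an equivalence relation at levels $j>0$), the axiom $\Box\phi\to[i]\phi$ for the inclusion $\R^{\ldm}_{i}\subseteq\R^{\ldm}_{\Box}$ at level $0$ with the definition of $\R^{\fut}$ handling higher levels, order-theoretic facts about $\mathbb{N}$ for seriality, transitivity, converseness and connectedness, and Lemma~\ref{AG_Rel_Same_All_Levels} for the no-choice-between-undivided-histories condition. So the route is the same as the paper's.

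The one genuine gap is the independence-of-agents condition, which is the only place in the lemma where real work happens; your proposal names the ingredients but does not give the argument. The paper takes worlds $u^{j}_{1},\dots,u^{j}_{n}$ pairwise related by $\R^{\ldm}_{\Box}$, forms $\hat{w}^{j}=\bigcup_{i}\{\phi\mid[i]\phi\in u^{j}_{i}\}$, and proves this set consistent by contradiction: from an inconsistency one extracts finite conjunctions $\bigwedge\Phi_{i}$ with $[i]\bigwedge\Phi_{i}\in u^{j}_{i}$, uses Lemma~\ref{Relations_Diamonds} to place $\bigwedge_{i\in Ag}\Diamond[i]\bigwedge\Phi_{i}$ in a single MCS, applies the $\ioa$ axiom to obtain $\Diamond\bigwedge_{i\in Ag}[i]\bigwedge\Phi_{i}$, and then uses Lemma~\ref{Existence_Lemma} together with reflexivity of $[i]$ to land $\bot$ in an MCS. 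The common refinement you mention is then produced by extending the consistent set $\hat{w}^{j}$ via Lindenbaum's Lemma (Lemma~\ref{Lindenbaum}) --- not via the Existence Lemma, whose role is inside the consistency argument rather than in supplying the ``requisite MCS'' as your sketch suggests. Without this syntactic derivation the condition is asserted, not proved. You also miss the case split the paper makes here: for $j>0$ no such argument is needed, since at positive levels all the relations already coincide with $\bigcap_{i\in Ag}\R^{\pres}_{i}$ and the required intersection is inhabited by any of the $u^{j}_{k}$ themselves.
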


\begin{proof} We prove that $M^{\ldm}$ has all the properties of a $\tstit$ model:

\begin{itemize}

\item By lemma \ref{Lindenbaum}, the $\ldm$ consistent set $\{p\}$ can be extended to a $\ldm$-MCS, and therefore $W^{\pres}$ is non-empty. Since $\mathbb{N}$ is non-empty as well, $W^{\pres} \times \mathbb{N} = W^{\ldm}$ is a non-empty set of worlds.

\item We argue that $\R^{\ldm}_{\Box}$ is an equivalence relation between worlds of $W^{\ldm}$, and omit the arguments for $\R^{\ldm}_{i}$ and $\R^{\ldm}_{Ag}$, which are similar. Suppose that $w^{j} \in W^{\ldm}$. We have two cases to consider: (i) $j = 0$, and (ii) $j > 0$. \textbf{(i)} Standard canonical model arguments apply and $\R^{\ldm}_{\Box}$ is an equivalence relation between all worlds of the form $w^{0} \in W^{\ldm}$ (See \cite{BlaRijVen01} for details). \textbf{(ii)} If we fix a $j >0$, then $\R^{\ldm}_{\Box}$ will be an equivalence relation for all worlds of the form $w^{j} \in W^{\ldm}$ since the intersection of equivalence relations produces another equivalence relation. Last, since $\R^{\ldm}_{\Box}$ is an equivalence relation for each fixed $j \in \mathbb{N}$, and because each $W^{\pres} \times \{j\} \subset W^{\ldm}$ is disjoint from each $W^{\pres} \times \{j'\} \subset W^{\ldm}$ for $j \neq j'$, we know that the union all such equivalence relations will be an equivalence relation.

\item[{\rm \textbf{(C1)}}] Let $i$ be in $Ag$ and assume that $(w^{j},u^{j}) \in \R^{\ldm}_{i}$. We split the proof into two cases: (i) $j = 0$, or (ii) $j > 0$. \textbf{(i)} Assume that $\Box \phi \in w^{0}$. Since $w$ is a $\ldm$-MCS, it contains the axiom $\Box \phi \rightarrow [i] \phi$, and so, $[i] \phi \in w$ as well. Since $(w,u) \in \R^{\pres}_{i}$ (because $j = 0$), we know that $\phi \in u$ by the definition of the relation; therefore, $(w,u) \in \R^{\pres}_{\Box}$, which implies that $(w^{0},u^{0}) \in \R^{\ldm}_{\Box}$ by definition. \textbf{(ii)} The assumption that $j >0$ implies that $u \in \R^{\fut}_{i} (w) $ $= \bigcap_{i \in Ag} \R^{\pres}_{i}(w) = \R^{\fut}_{\Box}(w)$ by definition, which implies that $(w^{j},u^{j}) \in \R^{\ldm}_{\Box}$.

\item[{\rm \textbf{(C2)}}] Let $u^{j}_{1}, ..., u^{j}_{n} \in W^{\ldm}$ and assume that $\R^{\ldm}_{\Box}u^{j}_{i}u^{j}_{k}$ for all $i,k \in \{1,...,n\}$. We split the proof into two cases: (i) $j = 0$, or (ii) $j > 0$. \textbf{(i)} We want to show that there exists a world $w^{j} \in W^{\ldm}$ such that $w^{j} \in \bigcap_{1 \leq i \leq n} \R^{\ldm}_{i}(u^{j}_{i})$. Let $\hat{w}^{j} = \bigcup_{1 \leq i \leq n} \{\phi | [i] \phi \in u^{j}_{i} \}$. Suppose that $\hat{w}^{j}$ is inconsistent to derive a contradiction. Then, there are $\psi_{1}$,...,$\psi_{k}$ such that $\vdash_{\ldm} \bigwedge_{1 \leq l \leq k} \psi_{i} \rightarrow \bot$. For each $i \in Ag$, we define $\Phi_{i} = \{\psi_{l}|[i] \psi_{l} \in u^{j}_{i}\} \subseteq \{\psi_{1},...,\psi_{k}\}$. Observe that for each $i \in Ag$, $[i] \bigwedge \Phi_{i} \in u^{j}_{i}$ because $\bigwedge [i] \Phi_{i} \in u^{j}_{i}$ and $\vdash_{\ldm} \bigwedge [i] \Phi_{i} \rightarrow [i] \bigwedge \Phi_{i}$. Since by assumption $\R^{\ldm}_{\Box}u^{j}_{i}u^{j}_{k}$ for all $i,k \in \{1,...,n\}$, this means that for any $u^{j}_{m}$ we pick (with $1 \leq m \leq n$), $\Diamond [i] \bigwedge \Phi_{i} \in u^{j}_{m}$ for each $i \in Ag$ by lemma \ref{Relations_Diamonds}; hence, $\bigwedge_{i \in Ag} \Diamond [i] \bigwedge \Phi_{i} \in u^{j}_{m}$. By the $\ioa$ axiom, this implies that $\Diamond \bigwedge_{i \in Ag} [i] (\bigwedge \Phi_{i}) \in u^{j}_{m}$. By lemma \ref{Existence_Lemma}, there must exist a world $v^{j}$ such that $\R^{\ldm}_{\Box}u^{j}_{m}v^{j}$ and $\bigwedge_{i \in Ag} [i] (\bigwedge \Phi_{i}) \in v^{j}$. But then, since $\vdash_{\ldm} [i] (\bigwedge \Phi_{i}) \rightarrow \bigwedge \Phi_{i}$ by reflexivity, $\vdash_{\ldm} \bigwedge_{i \in Ag} (\bigwedge \Phi_{i}) \leftrightarrow \bigwedge_{1 \leq i \leq k} \psi_{i}$, and $\vdash_{\ldm} \bigwedge_{1 \leq i \leq k} \psi_{i} \rightarrow \bot$, it follows that $\bot \in v^{j}$, which is a contradiction since $v^{j}$ is a $\ldm$-MCS. Therefore, $\hat{w}^{j}$ must be consistent and by lemma \ref{Lindenbaum}, it may be extended to a $\ldm$-MCS $w^{j}$. Since for each $[i] \phi \in u^{j}_{i}$, $\phi \in w^{j}$, we have that $w \in \R^{\pres}_{i}(u_{i})$ for each $i \in Ag$. Hence, $w \in \bigcap_{1 \leq i \leq n} \R^{\pres}_{i}(u_{i})$, and so, $w^{j} \in \bigcap_{1 \leq i \leq n} \R^{\ldm}_{i}(u^{j}_{i})$. \textbf{(ii)} Suppose that $j > 0$, so that $t^{j} \in \R^{\ldm}_{\Box}(s^{j})$ iff $t \in \R^{\fut}_{\Box}(s)$ = $\bigcap_{i \in Ag} \R^{\pres}_{i}(s)$. By assumption then, $u^{j}_{m} \in \bigcap_{i \in Ag} \R^{\pres}_{i}(u^{j}_{k}) = \R^{\fut}_{i}(u^{j}_{k})$ for all $k,m \in \{1,...,n\}$ and each $i \in Ag$. Hence, $u^{j}_{m} \in \bigcap_{i \in Ag} \R^{\fut}_{i}(u^{j}_{k})$ for all $k,m \in \{1,...,n\}$. If we therefore pick any $u^{j}_{k}$, it follows that $u^{j}_{k} \in \bigcap_{i \in Ag}\R^{\fut}_{i} (u^{j}_{i})$, meaning that the intersection $\bigcap_{1 \leq i \leq n} \R^{\ldm}_{i}(u^{j}_{i})$ is non-empty.

\item[{\rm \textbf{(C3)}}] Follows by definition.

\item $\R^{\ldm}_{\g}$ is a transitive and serial by definition, and $\R^{\ldm}_{\h}$ is the converse of $\R^{\ldm}_{\g}$ by definition as well.

%\begin{itemize}

\item[{\rm \textbf{(C4)}}] For all $u^{j}, u^{k}, u^{l} \in W^{\ldm}$, suppose that $\R^{\ldm}_{\g}u^{j}u^{k}$ and $\R^{\ldm}_{\g}u^{j}u^{l}$.  Then, $j < k$ and $j < l$, and since $\mathbb{N}$ is linearly ordered, we have that $k < l$, $k = l$, or $k> l$, implying that $\R^{\ldm}_{\g}u^{k}u^{l}$, $u^{k} = u^{l}$, or $\R^{\ldm}_{\g}u^{l}u^{k}$.

\item[{\rm \textbf{(C5)}}] Similar to previous case.

\item[{\rm \textbf{(C6)}}] Suppose that $(u^{j},v^{j+k}) \in \R^{\ldm}_{\g} \circ \R^{\ldm}_{\Box}$ with $k \geq 1$. By definition of $\R^{\ldm}_{\g}$, $u^{j+k}$ is the only element in $\R^{\ldm}_{\g}(u^{j})$ associated with $j+k$, and so, $(u^{j+k},v^{j+k}) \in \R^{\ldm}_{\Box}$ (By lemma \ref{Relate_only_same_level} no other $u^{j+k'}$ with $k' \neq k$ can relate to $v^{j+k}$ in $\R^{\ldm}_{\Box}$.). Since $k \geq 1$, $v^{j+k} \in \R^{\ldm}_{\Box}(u^{j+k})$ iff $v \in \R^{\fut}_{\Box}(u) = \bigcap_{i \in Ag} \R^{\pres}_{i}(u)$ iff $v^{0} \in \R^{\ldm}_{Ag}(u^{0})$. By lemma \ref{AG_Rel_Same_All_Levels}, $(u^{j},v^{j}) \in \R^{\ldm}_{Ag}$. This implies that, and since $(v^{j},v^{j+k}) \in \R^{\ldm}_{\g}$ by definition, we have that $(u^{j},v^{j+k}) \in \R^{\ldm}_{Ag} \circ \R^{\ldm}_{\g}$.

\item[{\rm \textbf{(C7)}}] Follows from the definition of the $\R^{\ldm}_{\g}$ relation.

%For all $w \in W$, if $u \in \R_{\Box}(w)$, then $v \not\in \R_{\g}(w)$;

%\end{itemize}

\item Last, it is easy to see that the valuation function $V^{\ldm}$ is indeed a valuation function.

\end{itemize}

\end{proof}

\begin{lemma}[Truth-Lemma]\label{Truth_Lemma} For any formula $\phi$, $M^{\ldm}, w^{0} \models \phi$ iff $\phi \in w^{0}$.

\end{lemma}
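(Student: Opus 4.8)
The plan is to prove both directions at once by induction on the structure of $\phi$, taking the induction hypothesis to hold at \emph{every} level-$0$ world simultaneously. The crucial preliminary observation is that $\mathcal{L}_{\ldm}$ contains no temporal operators: the only modalities that can occur in $\phi$ are $\Box$ and the agency operators $\agbox$ (one for each $i \in Ag$). By Lemma~\ref{Relate_only_same_level} each of these relates worlds only within a single temporal level, so once evaluation begins at $w^{0}$ the entire computation stays among worlds of the form $u^{0}$ and never descends to a level $j>0$. Consequently the relation $\R^{\ldm}_{Ag}$ and the whole temporal scaffolding $\R^{\ldm}_{\g},\R^{\ldm}_{\h}$ -- which were introduced only to turn $M^{\ldm}$ into a genuine $\tstit$ model in Lemma~\ref{Canonical_is_TTKSTIT_Model} -- are irrelevant to the truth-lemma. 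What matters is that at level $0$ the canonical relations coincide with the present pre-canonical ones: $\R^{\ldm}_{\Box}w^{0}u^{0}$ iff $\R^{\pres}_{\Box}wu$, and $\R^{\ldm}_{i}w^{0}u^{0}$ iff $\R^{\pres}_{i}wu$. This lets me transport the standard Henkin-style canonical-model reasoning verbatim.

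The base case $\phi = p$ is immediate from the definition of $V^{\ldm}$: $M^{\ldm}, w^{0} \models p$ iff $w \in V^{\pres}(p)$ iff $p \in w^{0}$. The Boolean cases $\neg\psi$, $\psi \wedge \chi$ and $\psi \vee \chi$ follow routinely from the induction hypothesis together with the usual closure and consistency properties of the $\ldm$-MCS $w$ (for instance $\neg\psi \in w^{0}$ iff $\psi \notin w^{0}$).

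For the modal operators I would run the usual two-directional argument, confined to level $0$. Consider $\Box\psi$. If $\Box\psi \in w^{0}$, then every $u^{0}$ with $\R^{\ldm}_{\Box}w^{0}u^{0}$ satisfies $\R^{\pres}_{\Box}wu$, hence $\psi \in u^{0}$ by the definition of $\R^{\pres}_{\Box}$, hence $M^{\ldm}, u^{0} \models \psi$ by the induction hypothesis; therefore $M^{\ldm}, w^{0} \models \Box\psi$. For the converse I argue by contraposition: if $\Box\psi \notin w^{0}$ then $\Diamond\neg\psi \in w^{0}$, and Lemma~\ref{Existence_Lemma}(i) supplies a world $u^{0}$ with $\R^{\ldm}_{\Box}w^{0}u^{0}$ and $\neg\psi \in u^{0}$, so that $M^{\ldm}, u^{0} \not\models \psi$ by the induction hypothesis and $M^{\ldm}, w^{0} \not\models \Box\psi$. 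The case $\agbox\psi$ is handled identically, replacing $\R^{\ldm}_{\Box}$ by $\R^{\ldm}_{i}$ and invoking the dual characterisation of Lemma~\ref{Relations_Diamonds}(ii) together with the witness furnished by Lemma~\ref{Existence_Lemma}(ii).

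I do not expect a genuine obstacle here: the whole content of the proof is the conceptual observation in the first paragraph, namely that, because the object language is temporal-operator-free, the induction can never leave the $0$-th temporal slice, so that the delicate higher-level structure (where $\R^{\ldm}_{\Box}$ degenerates to $\bigcap_{i}\R^{\pres}_{i}$) is simply never consulted. Once this is recognised, each modal clause reduces to a textbook existence-lemma argument. Collecting the cases completes the induction, and completeness of $\ldm$ with respect to $\tstit$ frames then follows as the announced corollary: an $\ldm$-consistent formula is extended to an $\ldm$-MCS $w$ by Lemma~\ref{Lindenbaum}, whence it lies in $w^{0}$ and is satisfied at $w^{0}$ in the $\tstit$ model $M^{\ldm}$.
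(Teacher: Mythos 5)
Your proposal is correct and takes essentially the same route as the paper, which simply states that the lemma is ``shown by induction on the complexity of $\phi$'' and defers to the standard canonical-model argument of Blackburn et al.; your key observation---that $\mathcal{L}_{\ldm}$ is temporal-operator-free, so evaluation never leaves level $0$, where $\R^{\ldm}_{\Box}$ and $\R^{\ldm}_{i}$ coincide with the present pre-canonical relations---is precisely what makes that standard argument go through here. No gap.
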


\begin{proof} Shown by induction on the complexity of $\phi$ (See \cite{BlaRijVen01}).

\end{proof}

\section{$\xstitl$ Derivation of IOA$^x$ Axiom}\label{XSTIT_IOA}

We make use of the system of rules $\ioax$, to derive the $\xstit$ IOA axiom in $\xstitl$. \\

%\begin{center}
\begin{small}
%\vspace{-0.6cm}
%\begin{figure}\label{ioaxderivation}
\begin{tabular}{@{\hskip -4em} c}
\AxiomC{$R_{\Box}w_{1}w_{2}, R_{\Box}w_{1}w_{3}, R_{\Box}w_{1}w_{4}, R_{A}w_{4}w_{5}, R_{A}w_{2}w_{5}, w_{2}: \langle A \rangle^x \overline{\phi}, w_{3}: \langle B \rangle^x \overline{\psi}, ...\ \ $% w_{1}: \Diamond ([A]^x\phi\land [B]^x\psi)
$ w_{5}: \phi, w_{5}: \overline{\phi}$}
\UnaryInfC{$R_{\Box}w_{1}w_{2}, R_{\Box}w_{1}w_{3}, R_{\Box}w_{1}w_{4}, R_{A}w_{4}w_{5}, R_{A}w_{2}w_{5}, w_{2}: \langle A \rangle^x \overline{\phi}, w_{3}: \langle B \rangle^x \overline{\psi}, ...\ \ $% w_{1}: \Diamond ([A]^x\phi\land [B]^x\psi)
$ w_{5}: \phi$}
\RightLabel{($\mathsf{IOA-U_{1}})$}
\UnaryInfC{$R_{\Box}w_{1}w_{2}, R_{\Box}w_{1}w_{3}, R_{\Box}w_{1}w_{4}, R_{A}w_{4}w_{5}, w_{2}: \langle A \rangle^x \overline{\phi}, w_{3}: \langle B \rangle^x \overline{\psi}, ... \ \ $%w_{1}: \Diamond ([A]^x\phi\land [B]^x\psi)
$ w_{5}: \phi$}
\UnaryInfC{$R_{\Box}w_{1}w_{2}, R_{\Box}w_{1}w_{3}, R_{\Box}w_{1}w_{4}, w_{2}: \langle A \rangle^x \overline{\phi}, w_{3}: \langle B \rangle^x \overline{\psi}, ...\ \ $%w_{1}: \Diamond ([A]^x\phi\land [B]^x\psi)
$ w_{4}: [A]^x\phi$}
\UnaryInfC{$D_1$}
\DisplayProof
\end{tabular}

\ \\

\begin{tabular}{@{\hskip -4em} c}
\AxiomC{$R_{\Box}w_{1}w_{2}, R_{\Box}w_{1}w_{3}, R_{\Box}w_{1}w_{4}, R_{B}w_{4}w_{6}, R_{B}w_{3}w_{6}, w_{2}: \langle A \rangle^x \overline{\phi}, w_{3}: \langle B \rangle^x \overline{\psi}, ....\ \ $%w_{1}: \Diamond ([A]^x\phi\land [B]^x\psi)
$ w_{6}: \psi, w_{6} : \overline{\psi}$}

\UnaryInfC{$R_{\Box}w_{1}w_{2}, R_{\Box}w_{1}w_{3}, R_{\Box}w_{1}w_{4}, R_{B}w_{4}w_{6}, R_{B}w_{3}w_{6}, w_{2}: \langle A \rangle^x \overline{\phi}, w_{3}: \langle B \rangle^x \overline{\psi}, ...\ \ $% w_{1}: \Diamond ([A]^x\phi\land [B]^x\psi)
$ w_{6}: \psi$}
\RightLabel{($\mathsf{IOA-U_{2}})$}
\UnaryInfC{$R_{\Box}w_{1}w_{2}, R_{\Box}w_{1}w_{3}, R_{\Box}w_{1}w_{4}, R_{B}w_{4}w_{6}, w_{2}: \langle A \rangle^x \overline{\phi}, w_{3}: \langle B \rangle^x \overline{\psi}, ...\ \ $% w_{1}: \Diamond ([A]^x\phi\land [B]^x\psi)
$ w_{6}: \psi$}
\UnaryInfC{$R_{\Box}w_{1}w_{2}, R_{\Box}w_{1}w_{3}, R_{\Box}w_{1}w_{4}, w_{2}: \langle A \rangle^x \overline{\phi}, w_{3}: \langle B \rangle^x \overline{\psi}, ...\ \ $% w_{1}: \Diamond ([A]^x\phi\land [B]^x\psi)
$ w_{4}: [B]^x\psi$}
\UnaryInfC{$D_2$}
\DisplayProof
\end{tabular}

\ \\

\begin{tabular}{@{\hskip -4em} c}

\AxiomC{$D_1\quad\quad\quad$}%\noLine
%\UnaryInfC{$\vdots\quad\quad$}
\AxiomC{$\quad\quad\quad D_2$}%\noLine
%\UnaryInfC{$\quad\quad\vdots$}

\BinaryInfC{$R_{\Box}w_{1}w_{2}, R_{\Box}w_{1}w_{3}, R_{\Box}w_{1}w_{4}, w_{2}: \langle A \rangle^x \overline{\phi}, w_{3}: \langle B \rangle^x \overline{\psi}, w_{1}: \Diamond ([A]^x\phi\land [B]^x\psi), w_{4}: [A]^x\phi\land [B]^x\psi$}
\UnaryInfC{$R_{\Box}w_{1}w_{2}, R_{\Box}w_{1}w_{3}, R_{\Box}w_{1}w_{4}, w_{2}: \langle A \rangle^x \overline{\phi}, w_{3}: \langle B \rangle^x \overline{\psi}, w_{1}: \Diamond ([A]^x\phi\land [B]^x\psi)$}
\RightLabel{($\mathsf{IOA-E}$)}
\UnaryInfC{$R_{\Box}w_{1}w_{2}, R_{\Box}w_{1}w_{3}, w_{2}: \langle A \rangle^x \overline{\phi}, w_{3}: \langle B \rangle^x \overline{\psi}, w_{1}: \Diamond ([A]^x\phi\land [B]^x\psi)$}
\UnaryInfC{$w_{1}: \Box \langle A \rangle^x \overline{\phi}, w_{1}:\Box \langle B \rangle^x \overline{\psi}, w_{1}: \Diamond ([A]^x\phi\land [B]^x\psi)$}
\UnaryInfC{$w_{1}: \Box \langle A \rangle^x \overline{\phi} \lor \Box \langle B \rangle^x \overline{\psi} \lor \Diamond ([A]^x\phi\land [B]^x\psi)$}
\DisplayProof
\end{tabular}
%\caption{Derivation of the IOA axiom in $\xstitl$.}
%\end{figure}
\end{small}
%\end{center}

  % \cite{JELIATEST}

% ---- Bibliography ----
%
% BibTeX users should specify bibliography style 'splncs04'.
% References will then be sorted and formatted in the correct style.
%
 \bibliographystyle{splncs04}
% \bibliographystyle{abbrv}
% \bibliography{mybib}
%

\end{document}